\DeclareFontFamily{U}{wncy}{}
\DeclareFontShape{U}{wncy}{m}{n}{<->wncyr10}{}
\DeclareSymbolFont{mcy}{U}{wncy}{m}{n}
\DeclareMathSymbol{\Sh}{\mathord}{mcy}{"58}
\newtheorem{thm}{Theorem}
\DeclareMathOperator{\tr}{tr}
\algnewcommand{\LineComment}[1]{\State \(\triangleright\) #1}
\begin{document}
%
\title{On-the-fly Adaptive $k$-Space Sampling for Linear MRI Reconstruction Using Moment-Based Spectral Analysis}


\author{Evan~Levine,~\IEEEmembership{Student Member,~IEEE,}
        ~Brian~Hargreaves
\thanks{E. Levine is with the Departments
of Electrical Engineering and Radiology, Stanford University, Stanford, CA, 94305 USA e-mail: (egl@stanford.edu)}
\thanks{B. Hargreaves is with the Department of Radiology, Stanford University, Stanford, CA, 94305 USA e-mail: (bah@stanford.edu)} 
\thanks{Manuscript received xxx; revised xxx.}}


%

\maketitle

\begin{abstract} 
    In high-dimensional magnetic resonance imaging applications, time-consuming, sequential acquisition of data samples in the spatial frequency domain ($k$-space) can often be accelerated by accounting for dependencies along imaging dimensions other than space in linear reconstruction, at the cost of noise amplification that depends on the sampling pattern. Examples are support-constrained, parallel, and dynamic MRI, and $k$-space sampling strategies are primarily driven by image-domain metrics that are expensive to compute for arbitrary sampling patterns. It remains challenging to provide systematic and computationally efficient automatic designs of arbitrary multidimensional Cartesian sampling patterns that mitigate noise amplification, given the subspace to which the object is confined. To address this problem, this work introduces a theoretical framework that describes local geometric properties of the sampling pattern and relates these properties to a measure of the spread in the eigenvalues of the information matrix described by its first two spectral moments. This new criterion is then used for very efficient optimization of complex multidimensional sampling patterns that does not require reconstructing images or explicitly mapping noise amplification. Experiments with \textit{in vivo} data show strong agreement between this criterion and traditional, comprehensive image-domain- and $k$-space-based metrics, indicating the potential of the approach for computationally efficient (on-the-fly), automatic, and adaptive design of sampling patterns.

\begin{IEEEkeywords}
$k$-space sampling, image reconstruction, and parallel MRI
\end{IEEEkeywords}
\end{abstract}


%
\IEEEpeerreviewmaketitle


\section{Introduction}
\IEEEPARstart{M}{agnetic} resonance imaging (MRI) scanners acquire data samples in the spatial frequency domain ($k$-space) of an object. In Cartesian MRI, the Nyquist sampling theorem dictates that the sample spacing and extent should correspond to the field-of-view and resolution of the acquisition, necessitating time-consuming sequential scanning of $k$-space. In higher-dimensional MRI applications, data is acquired along additional dimensions beyond space, such as time, echoes, or receive channels. To enable acceleration relative to the nominal rate, a signal model is often incorporated in the reconstruction to account for dependencies along dimensions other than space. In many cases, the signal model is defined per voxel and consists of a predefined subspace model. Assuming Gaussian noise, an unbiased estimate is then a pseudoinverse, for which errors can be characterized by a nonuniform noise level. Noise amplification due to this linear inversion depends only on the $k$-space sample distribution and reconstruction subspace but not the underlying object.

Many well-known linear reconstructions confine the signal in each voxel to a predefined subspace. A simple example is support-constrained imaging, which confines voxels outside of the support to the zero-dimensional subspace. The problem has been studied in the literature on nonuniform sampling and reconstruction of multiband signals \cite{jerri1977shannon, zayed1993advances, john1996sampling, venkataramani2001optimal, marks2012advanced}, with general theory developed by Landau \cite{landau1967necessary}. Some specialized approaches in MRI consider support regions for which an optimal sampling pattern can be selected \cite{Madore,Tsao2003}. A second example is parallel MRI, which uses multiple RF receive channels and relies on dependencies between channels to accelerate. In Sensitivity Encoding (SENSE) \cite{pruessmann1999sense}, the profile in each voxel across channels is restricted to a one-dimensional subspace spanned by the channel sensitivities and is scaled by the underlying magnetization. Further consideration has to be made when sampling can be varied along additional dimensions, which we index here by $t$. One example is dynamic MRI, where signal-intensity-time curves in each voxel are represented by a low-dimensional subspace \cite{liang2007spatiotemporal}. Other examples arise in parametric mapping \cite{petzschner2011fast,huang2013t2,zhao2015accelerated,tamir2016t2}, and Dixon fat and water MRI \cite{dixon1984simple}, to name a few.

Many conventional MRI acceleration methods use uniform sampling patterns, but the same inversions can be applied to nonuniformly sampled $k$-$t$-space by using iterative reconstruction. Nonuniform sampling allows acceleration factors to be fine-tuned and can be desirable for non-Cartesian MRI, contrast enhancement, motion, or regularized reconstruction.

Although a complete characterization of reconstruction errors is possible with the use of an image-domain geometry (g)-factor metric, g-factor is computationally expensive to estimate for arbitrary sampling patterns using pseudo-multiple-replica-type methods \cite{robson2008comprehensive,riffe2007snr} and does not directly explain the sources of noise amplification in terms of correlations in $k$-$t$-space. To address the latter problem, the inverse problem can be posed as approximation in a reproducing kernel Hilbert space, which provides a characterization in $k$-space \cite{athalye2015parallel}, but it still demands a computationally expensive procedure to derive error bounds. To our knowledge, a formalism that provides similar insights and is amenable to the computationally efficient generation of arbitrary $k$-$t$-space sampling patterns is still unavailable.

One nonuniform sampling strategy that is commonly used is Poisson-disc sampling, which has the property that no two samples are closer than a specified distance, despite the randomization that is motivated by compressed sensing \cite{lustig2010spirit}. Poisson-disc sampling has empirically performed well in parallel imaging, even without regularization, and it has been used in hundreds of publications to date, \cite{vasanawala2011practical,gdaniec2014robust,lebel2014highly,hsiao2012rapid}, to name a few. Poisson-disc sampling has been motivated by the use of a synthesizing kernel in $k$-space and the ``fit" between a point spread-function (PSF) and coil array geometry \cite{vasanawala2011practical}. However, criteria for the coil-PSF matching have not been formally described.

Some optimality criteria have been utilized for sampling and attempt to minimize a \textit{spectral moment}, either the trace of the inverse \cite{xu2005optimal} or the maximum reciprocal eigenvalue \cite{curtis2015optimal} of the information matrix. Both have prohibitive computational complexity for large-sized problems (e.g. 3D cartesian MRI at moderate resolution) because of how they probe a correspondingly large unstructured inverse matrix. Related methods approximate the inverse matrix and exploit the local nature of reconstruction in $k$-space \cite{samsonov2008optimality} and have been applied to the design of radial sampling patterns \cite{samsonov2009automatic}.

In this work, we use a moment-based spectral analysis to describe a new criterion for linear reconstruction that avoids inverting the information matrix and instead minimizes the variance in the distribution of its eigenvalues. Despite the use of eigenvalues in the criterion, this approach still provides simplified and interpretable theoretical insights into noise amplification in $k$-$t$-space. Local correlations in $k$-space are summarized by a weighting function derived only from system parameters. The new optimality criterion then specifies a corresponding cross-correlation or \textit{differential distribution} of sampling patterns.

The new analytical relationships described in this paper are developed into fast algorithms for generating adaptive sampling that may be arbitrary or structured. Since these algorithms do not explicitly map noise amplification or reconstruct images, they have extremely short run times, often sufficient for real-time interactive applications. We make our implementation available to the community.\footnote{\url{https://github.com/evanlev/dda-sampling}} Numerical experiments are performed to evaluate noise amplification for the derived sampling patterns. This is critical for constrained reconstruction, where conditioning is a major consideration in selecting sampling patterns.

\section{Theory}

\subsection{Differential sampling domain}
For notational simplicity, consider sampling on a finite one-dimensional Cartesian $k$-space grid of size $N$. Let $\{k_n^{(t)} \}_{n=0,t=0}^{N_t-1, T-1}$ be sample locations from one of $T$ sampling patterns. In the subsequent notation, $t$ and $t^\prime$ arguments are omitted in cases where $T=1$. Define the $t$\textsuperscript{th} sampling pattern as the function
\begin{equation}
s^{(t)}(k) = \sum_{n=0}^{N_t-1} \delta(k - k^{(t)}_n),
\end{equation}
where $\delta$ is the discrete Kronecker delta function, extended with period $N$. Define the point-spread function (PSF) for the sampling pattern as
\begin{equation}
PSF_t = \mathcal{F}^{-1} \{ s^{(t)} \},
\label{eq:psf_ft1}
\end{equation}
where $\mathcal{F}$ is a discrete Fourier transform. 

Next, define a distribution of inter-sample differences, or \textit{differential distribution}, of the sample locations, which is a cross-correlation of the sampling patterns:
\begin{align}
p(\Delta k, t, t^\prime) &= \sum_{n=0}^{N_t-1} \sum_{m=0}^{N_{t^\prime}-1} \delta(\Delta k - k_n^{(t)} + k_m^{(t^\prime)}) \\
                         &= (s^{(t)} \star s^{(t^\prime)})(\Delta k)
\label{eq:p_autocorr}
\end{align}

A similar continuous-domain differential distribution was previously introduced in computer graphics \cite{wei2011differential, zhou2012point,heck2013blue} to generate a single sampling pattern with a user-defined power spectrum.

From (\ref{eq:p_autocorr}) and (\ref{eq:psf_ft1}), the correlation property of the Fourier transform relates the differential distribution to the product of PSF's, the squared magnitude of the PSF in the single-time-point case:
\begin{equation}
p(\Delta k, t, t^\prime) = N \mathcal{F}\{ PSF_t(r) PSF_{t^\prime}^*(r) \}(\Delta k)
\label{eq:psf_ft2}
\end{equation}

Fig. \ref{fig:relationships} shows this relationship in an example. 
\begin{figure}[!t]
\centering
\includegraphics[width=0.49\textwidth]{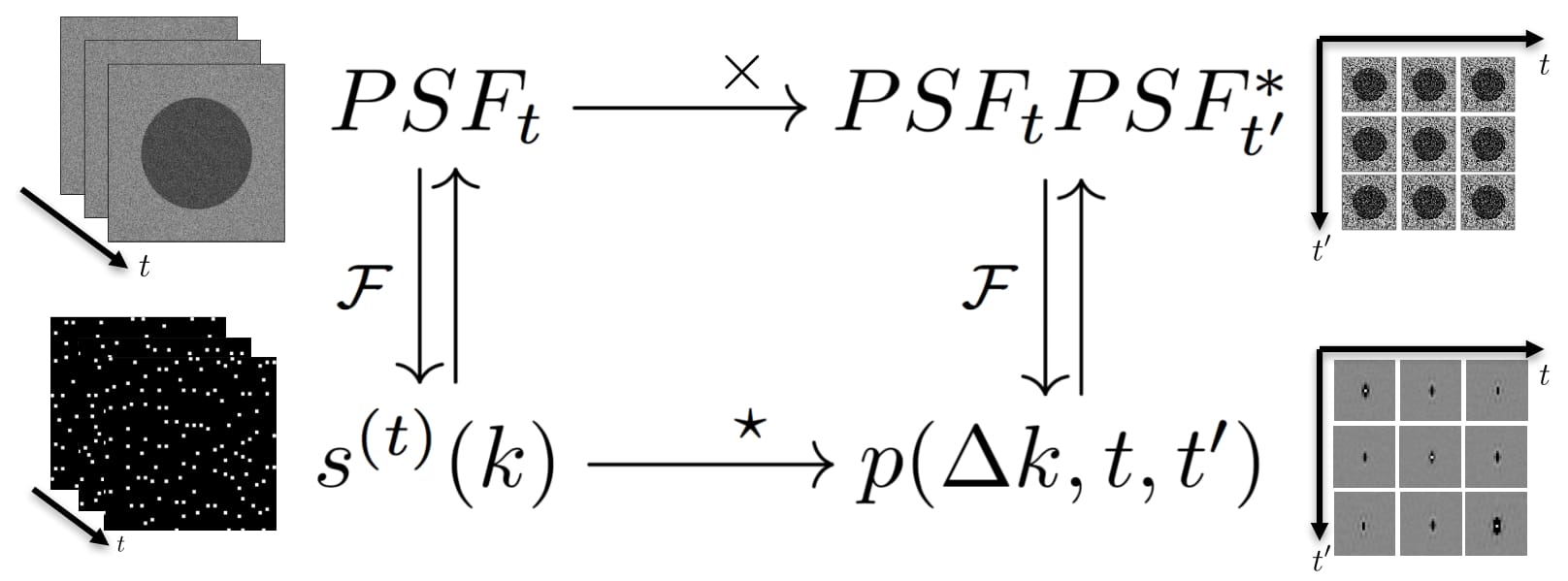}
\caption{The differential distribution is a distribution of differences in sample locations, or equivalently, a cross-correlation of sampling patterns. In image space, this correponds to a product of point-spread-functions. The differential distributions show that the sampling patterns avoid acquiring nearby samples in $k$-$t$ space, corresponding to low values of $p(\Delta k, t, t^\prime)$ where $\Delta k \approx 0$ and $t \approx t^\prime$. The central peak has been removed from point-spread-functions and their products.}
\label{fig:relationships}
\end{figure}

The differential distribution has natural properties due to the Fourier transform relationship in (\ref{eq:psf_ft2}). For example, knowing the differential distribution in a region of size $K$ near $\Delta k = 0$ summarizes the local statistics of the sampling pattern and can be used to determine products of PSFs at a resolution proportional to $1/K$. Thus, to engineer a PSF with high frequency content, one has to consider large neighborhoods of $k$-space. The dual property is that smoothing the differential distribution is equivalent to windowing the PSF. 

Examples of differential distributions are shown in Fig. \ref{fig:dd_examples}. It is natural that uniform sampling has a uniform differential distribution, since shifting the sampling pattern by the reduction factor produces the same pattern, while all other shifts produce a complementary pattern. Uniform random (white noise) sampling has a differential distribution with a constant mean where $\Delta k \neq 0$, since all pairwise differences are indepedent and uniform over the space. Poisson-disc sampling imposes a minimum distance between samples, or equivalently, that its differential distribution must satisfy
 \begin{equation}
p(\Delta k_y, \Delta k_z) = \begin{cases}
0, & \Delta k_y ^2 + \Delta k_z ^2 < \Delta k_{r,\text{max}}^2 \\
\text{arbitrary}, & \Delta k_y ^2 + \Delta k_z ^2 \geq \Delta k_{r,\text{max}}^2
\end{cases}
\end{equation}
for some minimum distance parameter $\Delta k_{r,\text{max}}$. Based on (\ref{eq:psf_ft2}), the lack of a transition band at $\Delta k_{r,\text{max}}$ explains the oscillations seen in the PSF for Poisson-disc sampling in Fig. \ref{fig:dd_examples}.

\begin{figure}[!t]
\centering
\includegraphics[width=0.49\textwidth]{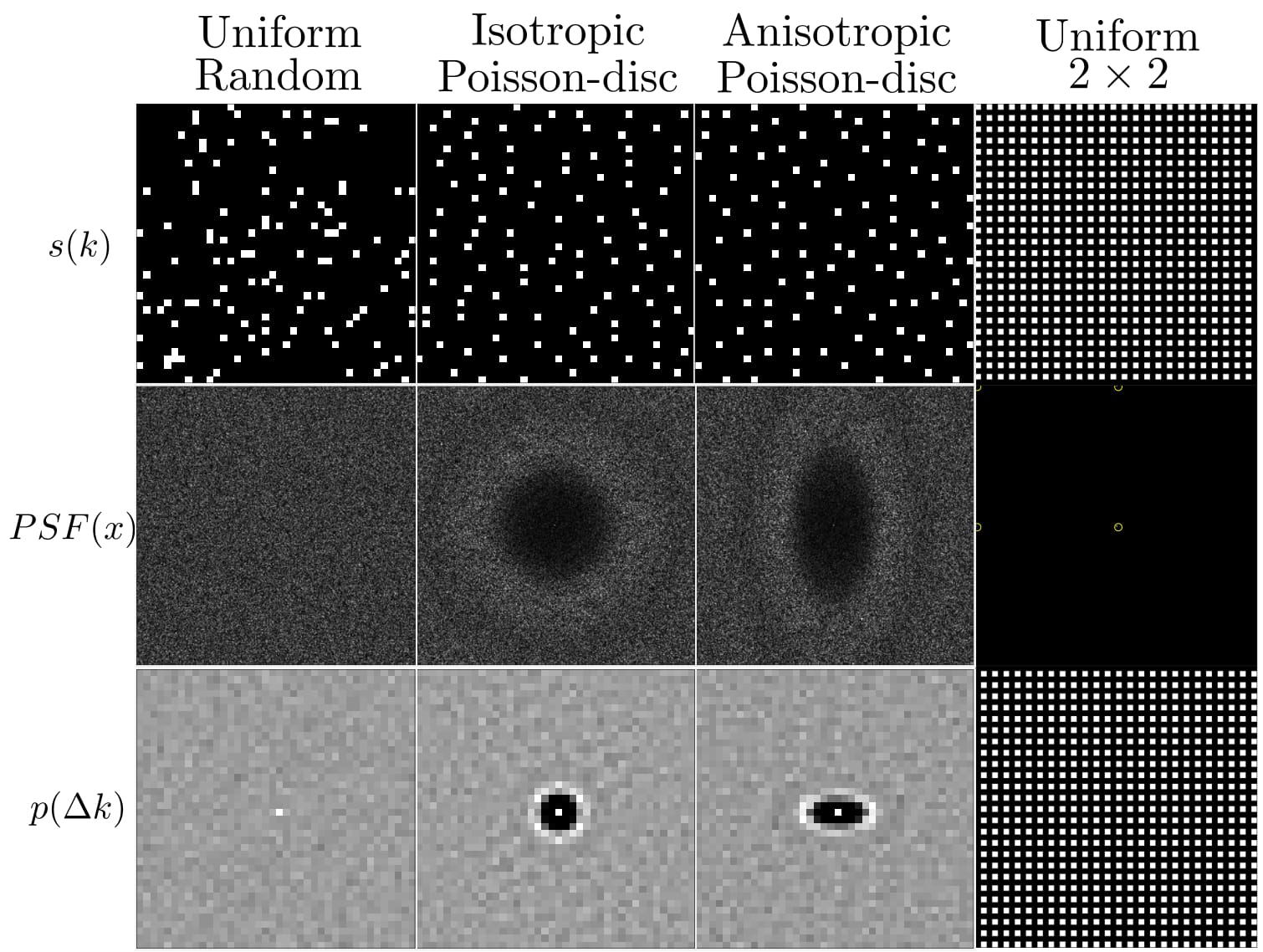}
\caption{Examples of point-spread functions and differential distributions are shown for uniform random sampling, isotropic and anisotropic Poisson-disc sampling, and separable $2\times 2$ uniform sampling. Peaks in the point-spread-functions are highlighted with yellow circles, and the central peak has been scaled for visualization.}
\label{fig:dd_examples}
\end{figure}

\subsection{Forward Model}
We consider a general linear forward model, where measurement for channel $c$ and ``frame" $t$ are generated by
\begin{equation}
y(k,t,c) = s^{(t)}(k) \mathcal{F} \{ \sum_{l} S_{t,l,c}(r) m_{l}(r) \} + \epsilon_{t,c}(k),
\label{eq:generalSENSE}
\end{equation}
where $\epsilon_{t,c}(k)$ is complex Gaussian white noise, $m_l$ is a complex-valued magnetization, $\mathcal{F}$ is a discrete Fourier transform operator, $S_{t,l,c}(r), t = 1,2,...T, c = 1, 2, \dots, C, l = 1, 2, \dots , L$ is a complex-valued ``sensitivity" function. The general case of correlated noise can be handled by a change of variables. Analogous to sensitivity maps in parallel imaging, $S_{t,l,c}$ parameterizes the subspace to which the signal in each voxel is confined, allowing one to represent any linear reconstruction in Fourier-based imaging that imposes a reconstruction subspace voxel-wise. The forward model includes dimensions along which the sampling pattern is variable (indexed by $t$) and fixed (such as coils, indexed by $c$). In voxel $r$, a $TC$-dimensional signal vector is constrained to a subspace of dimension $L$ spanned by basis functions encoded in $S_{t,l,c}(r)$ that are indexed by $l$. As described in subsequent examples, $L=T=1$ in SENSE and support-constrained imaging, and $L > 1$ and $T > 1$ accommodate more general linear reconstructions, including reconstructions for higher dimensional applications such as dynamic imaging.

(\ref{eq:generalSENSE}) can be written as a linear measurement model
\begin{equation}
    y = D\mathcal{F}Sm + \epsilon,
\end{equation}
where $D$, $\mathcal{F}$, and $S$ are operators corresponding to sampling, Fourier transformation, and sensitivity maps respectively, $\epsilon$ is a vector of complex Gaussian white noise, $m$ the magnetization, and $y$ the $k$-space data. Denoting $D\mathcal{F}S$ as $E$, a best unbiased estimate is obtained from a pseudoinverse, $(E^HE)^{-1}E^Hy$, which has a nonuniform noise term $(E^HE)^{-1} E^H \epsilon$. It is common to normalize the noise standard deviation by $\sqrt{R}$, where $R$ is the acceleration factor, and that of a fully-sampled acquisition, which is referred to as g-factor. The g-factor metric can be defined with $i$\textsuperscript{th} element
\begin{equation}
g_i = \frac{\sigma_{i,\text{accel}}}{\sigma_{i,\text{full}}\sqrt{R}},
\label{eq:gGeneral}
\end{equation}
where $\sigma_{i,\text{accel}}$ and $\sigma_{i,\text{full}}$ are the standard deviations of the $i$\textsuperscript{th} element of the fully-sampled and accelerated reconstructions respectively.

\subsection{Noise Amplification and Differential Distributions}
To select sampling patterns, an optimality criterion must be chosen that makes the information matrix $E^HE$ well-conditioned. One family of classical criteria from statistics is the spectral moments, with the $n$\textsuperscript{th} spectral moment defined as
\begin{equation}
\tr ((E^HE)^{n}) = \sum_k \lambda_k^n,
\end{equation}
where $\lambda_k$ is the $k$\textsuperscript{th} eigenvalue of $E^HE$ and $n \leq 1$. For parallel MRI, $n=-1$ and $n=-\infty$ have been investigated \cite{curtis2015optimal,xu2005optimal}.

We show formulas for the first two spectral moments for (\ref{eq:generalSENSE}). First, the spectral moment for $n=1$ is independent of the sampling locations and is given by the following expression.
\begin{thm}\label{thm:theorem1}
\begin{equation}
\sum_k \lambda_k =  \sum_{r,l,t,c} |S_{t,l,c}(r)|^2 PSF_{t}(0) 
\end{equation}
\end{thm}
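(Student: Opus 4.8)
The plan is to recognize that the first spectral moment is nothing more than the squared Frobenius norm of the encoding operator, $\sum_k \lambda_k = \tr(E^H E) = \sum_{(k,t,c),(r,l)} |E_{(k,t,c),(r,l)}|^2$, and then to show that this sum of squared matrix entries collapses to a form that is blind to where the samples sit. First I would read off a single matrix entry of $E = D\mathcal{F}S$ directly from the forward model (\ref{eq:generalSENSE}): acting on $m_l(r)$ and producing $y(k,t,c)$, the operator multiplies by the sensitivity $S_{t,l,c}(r)$, applies the Fourier kernel $F_{k,r}$, and masks with the sampling indicator $s^{(t)}(k)$, so that $E_{(k,t,c),(r,l)} = s^{(t)}(k)\,F_{k,r}\,S_{t,l,c}(r)$.

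The central observation is that taking the squared magnitude separates the three factors and, crucially, removes all dependence on the sample positions. Since the samples within a frame are distinct, $s^{(t)}(k)$ is a $0$--$1$ indicator and $|s^{(t)}(k)|^2 = s^{(t)}(k)$; since $F_{k,r}$ is a discrete Fourier kernel, its magnitude $|F_{k,r}|^2$ is a constant independent of both $k$ and $r$; and $|S_{t,l,c}(r)|^2$ factors out. I would therefore write
\begin{equation}
\tr(E^H E) = \sum_{r,l,t,c} |S_{t,l,c}(r)|^2 \sum_k s^{(t)}(k)\,|F_{k,r}|^2 ,
\end{equation}
and evaluate the inner sum. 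Because $|F_{k,r}|^2$ is a constant, $\sum_k s^{(t)}(k)\,|F_{k,r}|^2$ is simply that constant times the number of samples in frame $t$, and is in particular independent of the sample locations.

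The last step is to identify this inner sum with $PSF_t(0)$. By the definition $PSF_t = \mathcal{F}^{-1}\{ s^{(t)} \}$ in (\ref{eq:psf_ft1}), evaluating at $r = 0$ gives $PSF_t(0)$ equal to the (inverse-transform-normalized) count of samples in frame $t$, which is exactly the inner sum above. Substituting yields $\tr(E^H E) = \sum_{r,l,t,c} |S_{t,l,c}(r)|^2\,PSF_t(0)$, as claimed, and exposes the conceptual content of the statement: the trace merely counts measurements, weighted by $|S_{t,l,c}(r)|^2$, and is insensitive to their placement in $k$-space.

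I expect the only delicate point to be the bookkeeping of the Fourier normalization. One must use the same transform convention in the forward model $E = D\mathcal{F}S$ and in the definition of $PSF_t$, so that the constant value of $|F_{k,r}|^2$ and the normalization hidden inside $\mathcal{F}^{-1}$ agree and make the inner sum equal to $PSF_t(0)$ itself rather than a scalar multiple of it; with an inconsistent choice (e.g., unnormalized versus unitary DFT) the two sides differ by a factor of $N$. Everything else in the argument is mechanical, so pinning down this convention is the crux.
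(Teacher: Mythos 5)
Your proof is correct and is essentially the same argument as the paper's: the paper computes $\tr(E^HE)$ by summing the diagonal blocks $S_{tc,l}^H F^H D_t F S_{tc,l}$ and using $e_r^H F^H D_t F e_r = PSF_t(0)$, which is exactly your inner sum $\sum_k s^{(t)}(k)\lvert F_{k,r}\rvert^2$ written in operator form; your Frobenius-norm framing (with the $\lvert s^{(t)}(k)\rvert^2 = s^{(t)}(k)$ observation) is just a reorganization of the same computation, and your caveat about the DFT normalization convention applies equally to the paper's proof.
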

\begin{proof}
A proof is provided in Appendix A. 
\end{proof}

Next, the spectral moment for $n=2$ is given by a single inner product with the differential distribution. 
\begin{thm}\label{thm:theorem2}
\begin{equation}
\sum_k \lambda_k^2 = \langle w, p \rangle,
\label{eq:obj_sos_thm2}
\end{equation}
\end{thm}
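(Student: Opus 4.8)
The plan is to recognize that the second spectral moment is the squared Frobenius norm of the (Hermitian, positive semidefinite) information matrix, $\tr((E^HE)^2)=\sum_k\lambda_k^2=\sum_{(r,l),(r',l')}|(E^HE)_{(r,l),(r',l')}|^2$, and then to evaluate each matrix entry in closed form so that the sampling pattern enters only through the PSF. Reading the forward operator off (\ref{eq:generalSENSE}), $E$ maps the coefficient $m_l(r)$ to the datum at $(k,t,c)$ with entry proportional to $s^{(t)}(k)\,e^{-2\pi i kr/N}S_{t,l,c}(r)$. First I would compute the entries of $E^HE$ directly: since $s^{(t)}$ takes values in $\{0,1\}$, the channel sum and the $k$-sum separate, and the exponential sum collapses via the definition of the PSF in (\ref{eq:psf_ft1}), namely $\sum_k s^{(t)}(k)e^{2\pi i k(r-r')/N}=\sqrt{N}\,PSF_t(r-r')$. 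This yields
\[
(E^HE)_{(r,l),(r',l')}=\frac{1}{\sqrt N}\sum_t PSF_t(r-r')\,A_{t,l,l'}(r,r'),
\]
where $A_{t,l,l'}(r,r')=\sum_c \overline{S_{t,l,c}(r)}\,S_{t,l',c}(r')$ collects all dependence on the sensitivities.

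Next I would square and sum. Expanding $|\cdot|^2$ introduces a second frame index $t'$ and the product $PSF_t(r-r')\,\overline{PSF_{t'}(r-r')}$. The crucial step is to rewrite this product using the Fourier-correlation identity (\ref{eq:psf_ft2}): writing $\Delta r=r-r'$, one has $PSF_t(\Delta r)\,\overline{PSF_{t'}(\Delta r)}=N^{-3/2}\sum_{\Delta k}e^{2\pi i\Delta k\,\Delta r/N}\,p(\Delta k,t,t')$, which is precisely what carries the differential distribution into the expression. Substituting and interchanging the order of summation, the sums over voxels $r,r'$, subspace indices $l,l'$, and the exponential no longer involve the sampling pattern; bundling them defines the weighting function
\[
w(\Delta k,t,t')=\frac{1}{N^{5/2}}\sum_{r,r',l,l'}e^{2\pi i\Delta k(r-r')/N}\,A_{t,l,l'}(r,r')\,\overline{A_{t',l,l'}(r,r')},
\]
which depends only on the sensitivity maps and not on where samples are placed. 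What remains is exactly $\sum_{\Delta k,t,t'}w(\Delta k,t,t')\,p(\Delta k,t,t')=\langle w,p\rangle$, establishing (\ref{eq:obj_sos_thm2}) and simultaneously identifying $w$ as the system-derived weighting.

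The step I expect to require the most care is not any single inequality but the bookkeeping that cleanly separates the sampling-dependent factor from the system-dependent factor. The obstacle is that $A_{t,l,l'}(r,r')$ depends on $r$ and $r'$ separately rather than on their difference, so the full sum is not a plain convolution and cannot be diagonalized outright; the resolution is that it need not be, since only the $PSF$ product must be expressed through $p$, after which every voxel and subspace sum is absorbed into $w$. I would also check that the apparent complex-valuedness of $w$ at individual $(\Delta k,t,t')$ is harmless: $p$ is real and the left-hand side $\sum_k\lambda_k^2$ is real and nonnegative, so the Hermitian symmetry $(\Delta k,t,t')\mapsto(-\Delta k,t',t)$ shared by $w$ and $p$ forces the imaginary contributions to cancel in the inner product. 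Finally I would verify that the Fourier-normalization constants from (\ref{eq:psf_ft1})--(\ref{eq:psf_ft2}) propagate consistently, since the powers of $N$ are the easiest place to drop a factor.
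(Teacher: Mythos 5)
Your proof is correct and is essentially the paper's own argument (Appendix B) in different clothing: your Frobenius-norm identity $\tr((E^HE)^2)=\sum_{(r,l),(r',l')}|(E^HE)_{(r,l),(r',l')}|^2$ is the entrywise form of the paper's block-trace expansion $\sum_{l,l'}\tr((E^HE)_{ll'}(E^HE)_{l'l})$, both routes pass through the identical intermediate sum of $PSF_t(r-r')\,PSF_{t'}^*(r-r')$ against four sensitivity factors, and your direct substitution of the Fourier series for the PSF product via (\ref{eq:psf_ft2}) plays exactly the role of the paper's Parseval/convolution-theorem step before bundling the sensitivity-dependent sums into $w$. The only caveat is the powers of $N$: your constants ($\sqrt{N}$, $N^{-3/2}$, $N^{-5/2}$) presume a unitary DFT, whereas the paper's (\ref{eq:psf_ft2}) and (\ref{eq:wmtx}) implicitly use a non-unitary convention, but this is pure bookkeeping that you explicitly flag and it does not affect the structure or validity of the argument.
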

where 
\begin{equation}
w(\Delta k, t, t^\prime) = \frac{1}{N^2} \sum_{c,c^\prime} | \sum_{l} \mathcal{F} \{ S_{t^\prime,l,c^\prime}^*(r) S_{t,l,c}(r) \} |^2
\label{eq:wmtx}
\end{equation}
\begin{proof}
A proof is provided in Appendix B. 
\end{proof}

One measure of the spread in the eigenvalues is variance, which is a function of the spectral moments $\tr(E^HE)$ and $\tr((E^HE)^2)$. Minimizing the latter with the former constant minimizes the variance in the distribution of eigenvalues. In practice, $PSF_t(0)$ is often constant because the number of samples in each ``frame" is specified. Although this heuristic criterion does not have a general relationship to traditional g-factor-based criteria, its primary advantages are that it does not require matrix inversion and relates noise amplification to the sampling geometry.

Tikhonov regularization has been widely used to improve conditioning in parallel MRI and requires inverting the matrix, $E^HE + \lambda I$, where $\lambda$ is the regularization parameter. The effect of this matrix is to shift the eigenvalues by a positive constant, which does not change the variance in the distribution of eigenvalues. Thus, the criterion does not have to be modified if Tikhonov regularization is used. In conjunction with arbitrary sampling, a small amount of Tikhonov regularization is one strategy used to mitigate noise in the least squares solution that appears in later iterations of iterative linear solvers such as conjugate gradient \cite{qu2005convergence}.

$w$ encodes correlations in $k$-space that are expressed in terms of the encoding capability of the sensitivity functions. Naturally, the Fourier transform relationship implies that (anisotropic) scaling of the sensitivity functions results in the inverse scaling of $w$ and therefore the derived sampling pattern, while circulant shifts of the sensitivity functions have no effect. The bandwidth of the sensitivity functions limits the extent of the differential distribution and size of $k$-space neighborhoods that need to be considered for sampling. Section \ref{subsec:Pn} discusses the relationship between $w$ and the reproducing kernel that is associated with the model.

Theorem \ref{thm:theorem2} also provides some explanation for Poisson-disc sampling. By imposing a minimum difference between samples, Poisson-disc sampling nulls the low spatial frequencies of $w$. This confirms the intuition that Poisson-disc sampling is well-conditioned due to its uniform distribution.

Due to the use of a readout in MRI, subsets of $k$-space samples are acquired together. For Cartesian MRI, the differential distribution is constant along the readout direction. Thus, the value of $\tr((E^HE)^2)$ with a Cartesian readout is then the inner product of $w$, summed along the readout direction, and the differential distribution of the sampling pattern in the phase encoding plane. In this way, minimizing $\tr((E^HE)^2)$ for 3D or 2D Cartesian sampling is simplified to a 2D or 1D problem. 

\subsection{Efficient Optimization of Sampling Patterns}
\label{sec:optimization}  
Given a forward model parameterized by the sensitivity functions $S_{t,l,c}(\cdot)$, we formulate the sampling design as minimization of the cost-criterion
\begin{align} 
    \underset{\mathcal{S}}{\text{minimize }} J(\mathcal{S}) & = \langle w, p \rangle \label{eq:obj_sos}
\end{align}
where $\mathcal{S}$ is the set of sample locations, limited to some number, only $p$ is a function of $\mathcal{S}$, and only $w$ is a function of $S_{t,l,c}(\cdot)$ given by (\ref{eq:wmtx}). We refer to sampling patterns derived by minimizing $J(\mathcal{S})$ as $\min \tr((E^HE)^2)$ sampling.

Since computing a global optimum requires an intractable search, we instead rely on greedy heuristics. One general strategy is a sequential selection of samples, and this procedure is fast due to several properties of the differential distribution and $J(\mathcal{S})$.

Starting from an arbitrary sampling pattern and sensitivity functions, $J(\mathcal{S})$ can be computed efficiently. The differential distribution can be computed using the FFT or from pairwise differences, and $w$ can be computed with the FFT. The inner product then requires only $NT^2$ multiplications to evaluate.

To approach (\ref{eq:obj_sos}) with a greedy forward selection procedure, computing the differential distribution and cost function explicitly are not necessary. It suffices to keep track of a map of the increase in $J(\mathcal{S})$ that occurs when inserting a sample candidate $(k^\prime, t^\prime)$:
\begin{equation}
\Delta J(k^\prime,t^\prime) = w(0,t^\prime,t^\prime) + 2 \sum_{t} s^{(t)} * w(\cdot, t^\prime, t),
\label{eq:DeltaJ}
\end{equation}
where the convolution is with respect to the first argument of $w$. After inserting the sample, $\Delta J$ is updated as  
\begin{equation}
\Delta J(k,t) \rightarrow \Delta J(k,t) + 2 w(k - k^\prime, t, t^\prime).
\label{eq:DeltaJIns}
\end{equation}
Similarly, removing a sample changes the cost by $-\Delta J$. $\Delta J$ can be used to explain the sources of noise amplification in $k$-space that contribute to $J(\mathcal{S})$.

We make use of (\ref{eq:DeltaJ}) and (\ref{eq:DeltaJIns}) in an adaptation of the so-called Mitchell's best candidate sampling algorithm \cite{mitchell1991spectrally} to perform a forward sequential selection of samples, where at each step the best candidate, defined as the sample minimizing of $\Delta J$, is added to the pattern. This procedure is described in Algorithm \ref{alg:BC}.

\begin{algorithm}
\caption{Exact $\min \tr((E^HE)^2)$ Best Candidate Sampling}
\begin{algorithmic}[1]
\Procedure{ExactBestCandidate}{}
\State Initialize $\Delta J(k,t) = w(0,t,t) $ 
\State Initialize $s(k,t) \leftarrow 0$ \Comment{Sampling pattern}
\For{1..number of samples}
\LineComment{Select best candidate sample}
\State $(k^\prime,t^\prime) = \arg \min_{k,t} \Delta J(k,t)$ 
\State $s(k^\prime,t^\prime) = s(k^\prime,t^\prime) + 1$
\LineComment{Perform update from (\ref{eq:DeltaJIns}).}
\For{$k,t$}
\State $\Delta J(k,t) \leftarrow \Delta J(k,t) + 2w(k - k^\prime, t, t^\prime)$
\EndFor
\EndFor
\State \textbf{return} $s(k,t)$
\EndProcedure
\end{algorithmic}
\label{alg:BC}
\end{algorithm}

The time complexity of best candidate sampling is relatively low, but the algorithm requires on the order of a minute for single-time-point sampling with a $256 \times 256$ phase encoding matrix. Two strategies greatly improve its efficiency. One strategy, described in Section \ref{sec:approxbc}, is to use a local approximation of $w$, and by extension, the reproducing kernel, which is justifiable, for example, when the sensitivity functions are bandlimited. A second strategy is to restrict the set of sampling patterns to consider. An example is described in Section \ref{sec:periodic}.

\section{Experiments} 
\label{sec:experiments}
\subsection{Support-constrained Imaging} \label{subsec:supp}
Constraints on the object support can be represented by defining the sensitivity functions to be zero outside of the object support. Poisson-disc sampling and $\min \tr((E^HE)^2)$ sampling were compared for two support profiles generated synthetically. Although this does not account for the benefit of randomization in Poisson-disc sampling for regularization, it demonstrates the impact of an adaptive sampling strategy on conditioning compared to one common alternative. The number of samples was arbitrarily chosen to be equal to the size of the support region. The resulting noise amplification was quantified using the g-factor metric in (\ref{eq:gGeneral}), computed with a pseudo multiple replica method \cite{robson2008comprehensive} with 750 averages, a conjugate gradient solver, and a Gaussian white noise model. The following measure was used as a stopping criterion
\begin{equation}
\delta = \frac{\|x^{k+1} - x^k\|_2}{\|x^k\|_2},
\label{eq:stopping}
\end{equation}
where $x^k$ is the $k$\textsuperscript{th} iterate. The solver is stopped when $\delta$ drops below a predefined threshold. Maps of $\Delta J$ were used to identify regions of $k$-space that explain the larger value of $J(\mathcal{S})$ and the predicted sources of noise amplification. 

The first profile was a cross-shaped profile first introduced for dynamic imaging \cite{Madore}. For this profile, quincunx (checkerboard-like) sampling patterns provide optimal sampling: their g-factor is 1, and the eigenvalues of $E^HE$ are all equal. Since having all eigenvalues of $E^HE$ equal minimizes $J(\mathcal{S})$, $\min \tr((E^HE)^2)$ sampling should ideally produce quincunx sampling. For this profile, Tikhonov regularization with a very small parameter ($\lambda=10^{-4}$) was used in reconstructions. The standard deviations of Tikhnov regularized solutions were used in estimating $\sigma_{i,\text{full}}$ and $\sigma_{i,\text{accel}}$ in (\ref{eq:gGeneral}).

To demonstrate the ability to adapt to an arbitrary support region, a second support region consisting of a rotated ellipse was considered. For this profile, Tikhonov regularization with a very small parameter ($\lambda=10^{-5}$) was used.

g-Factor maps in Fig. \ref{fig:gmaps} show that $\min \tr((E^HE)^2)$ sampling patterns have lower g-factors than Poisson-disc sampling. For the cross-shaped profile, the former is identical to quincunx sampling, except in a few small regions due to the nature of local optimization, and the g-factor is very close to 1 (95\textsuperscript{th} percentile g-factor = 1.05). This can be explained by the near-perfect matching between the $w$ and $p$, which have sidebands that almost null each other. The value of $\Delta J$ is constant over $k$-space, which is consistent with the symmetry of quincunx sampling. In this case, it is possible to achieve the lower bound 
\begin{equation} 
\sum_k \lambda_k^2 \geq \frac{1}{N} (\sum_k \lambda_k)^2.
\label{eq:bounds}
\end{equation}
With equality, all eigenvalues are equal, corresponding to ideal conditioning. Naturally, it can be shown that this is possible when the support profile uniformly tiles the plane.

The elliptical region in Fig. \ref{fig:gmaps}b has a $w$ with ringing that results in a matching $p$. Maps of $\Delta J$ show regions where local noise amplification is predicted, such as in regions far from other samples seen in Poisson-disc sampling.
\begin{figure*}
\centering
\includegraphics[width=0.98\textwidth]{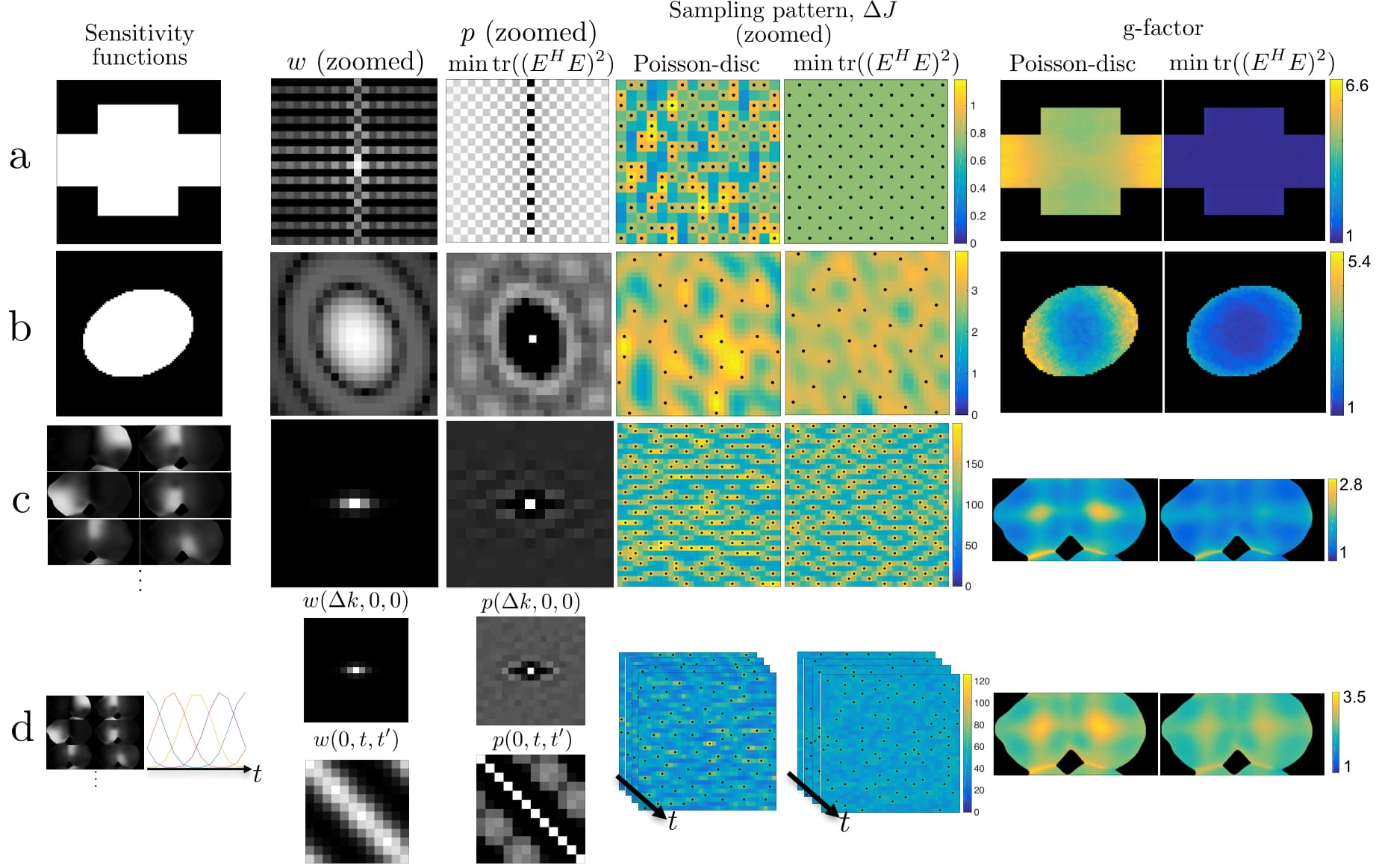} 
\caption{$\min \tr((E^HE)^2)$ and Poisson-disc sampling are compared in problems with increasing generality: support-constrained MRI (a-b), parallel MRI (c), and parallel MRI with a temporal basis of B-splines (d). Representative slices along the readout are shown. Differential distributions $p$ for $\min \tr((E^HE)^2)$ sampling patterns adapt to the weighting $w$ to minimize $\langle w, p \rangle$ and describe of geometric properties of the sampling pattern. Sampling patterns with sample locations indicated by black dots are shown with $\Delta J$, the increase in $\tr((E^HE)^2)$ resulting from sampling a new location corresponding to a prediction of local noise amplification. g-Factor maps show the noise amplification in the image domain.} 
\label{fig:gmaps}
\end{figure*}

\subsection{Parallel Imaging} \label{sec:pi}
The SENSE parallel MRI model of acquisition with an array of $C$ receiver coils can be described by a linear model composed of a spatial modulation of the magnetization by each channel sensitivity, Fourier encoding, and sampling. The channel sensitivities comprise the single set of sensitivity functions in (\ref{eq:generalSENSE}), so that $L=T=1$.

Isotropic Poisson-disc and $\min \tr((E^HE)^2)$ sampling patterns with $6$-fold acceleration were compared for SENSE reconstruction using \textit{in vivo} breast data acquired with a 16-channel breast coil (Sentinelle Medical, Toronto, Canada), a 3D spoiled gradient-echo sequence on a 3T scanner ($14^{\circ}$ flip angle, TE=2.2ms, TR=6ms, matrix size=$386 \times 192$, FOV=$22\times 17.6\times 16$ cm\textsuperscript{3}). Sensitivity maps were estimated with the ESPIRiT method \cite{uecker2014espirit} with one set of maps, scaled so that their sum-of-squares over coils is unity over the image support. g-factor maps were calculated with the pseudo multiple replica method described in Section \ref{subsec:supp} with the stopping criterion (\ref{eq:stopping}), which allowed convergence. Tikhonov regularization with a very small parameter ($\lambda=2\times 10^{-4}$) was used. $w$ was computed using the 3D sensitivity maps and summed along the readout, effectively defining the encoding operator in (\ref{eq:obj_sos}) to include the readout.

Reconstructed images and error maps for a representative slice along the readout are shown in Fig. \ref{fig:pmri_images}, which shows lower reconstruction error for $\min \tr((E^HE)^2)$ sampling (RMSE=9.6\%) than for Poisson-disc sampling (RMSE=10.6\%). This is confirmed in the g-factor maps in Fig. \ref{fig:gmaps}c. The breast coil has more coil sensitivity variation in the left-right direction, which can be seen from the extent of $w$. The differential distribution for $\min \tr((E^HE)^2)$ sampling shows that it adapts by generating higher acceleration in the left-right direction. This is also reflected in $k$-space by $\Delta J$, which is low in regions not providing optimal sampling geometry for acceleration.

\begin{figure}[!t]
\centering
\includegraphics[width=0.49\textwidth]{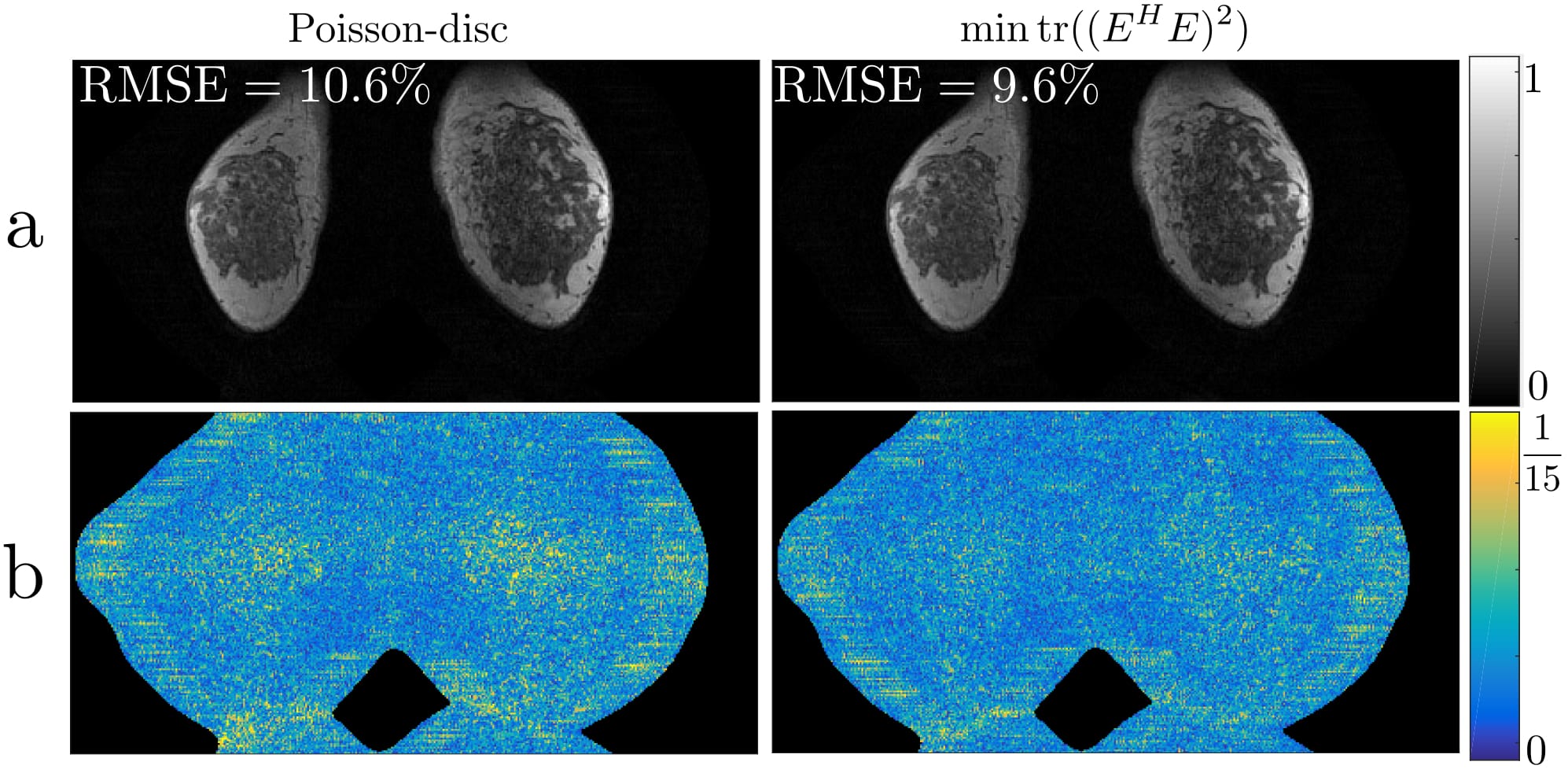}%
\caption{Reconstructed images and error maps for Poisson-disc and $\min \tr((E^HE)^2)$ sampling patterns, from experiments with parallel imaging with a breast coil. Both sampling patterns used a reduction factor of 6.}
\label{fig:pmri_images}
\end{figure}

\subsection{Temporally Constrained Dynamic Parallel MRI}
The use of $\min \tr((E^HE)^2)$ sampling for multidimensional imaging was demonstrated for multi-time-point (dynamic) parallel MRI. For dynamic imaging to be well-posed, it is necessary to assume a certain limited temporal resolution. One generic way to incorporate this assumption to to represent the dynamics in each voxel by a predefined basis of B-splines, allowing coefficients to be recovered by linear reconstruction \cite{nichols2002spatiotemporal,jerosch2002myocardial,filipovic2011motion,le2016spline}. 

Assuming a discretization of the time axis into $T$ frames indexed by $t$, and $L<T$ basis functions $\{ \beta_l(t) \}_{l=1}^L$, (\ref{eq:generalSENSE}) can be used to represent such a contrast change model with 
\begin{equation}
S_{t,l,c}(r) = C(r,c) \beta_l(t),
\label{eq:temporalbasis2}
\end{equation}
where $C(r,c)$ are the coil sensitivity maps for channel $c$. The $l$\textsuperscript{th} image of coefficients in this basis is $m_l(r)$. This can also be viewed as a subspace model \cite{liang2007spatiotemporal} with a predefined temporal basis. The sampling problem is to select a $k$-$t$ sampling pattern that balances spatiotemporal correlations so that inversion of the model in (\ref{eq:temporalbasis2}) is well-conditioned. For (\ref{eq:temporalbasis2}), the optimal strategy suggested by $\tr((E^HE)^2)$ is given by
\begin{equation}
w(\Delta k, t, t^\prime) = \lbrack |\sum_{l} \beta_l(t) \beta_l^*(t^\prime)|^2 \rbrack \lbrack \frac{1}{N^2} \sum_{c,c^\prime} | \mathcal{F} \{ C_{c^\prime}^*(r) C_{c}(r) \} |^2 \rbrack,
\end{equation}
which is a product of two functions in brackets, the first of which encodes correlations in time and the second of which encodes correlations in $k$-space. These functions are shown in Fig. \ref{fig:gmaps}d.

The breast data and coil sensitivities used in experiments with parallel imaging was used in a hypothetical dynamic MRI acquisition with a linear reconstruction of the coefficients in the spline basis. The time axis was discretized into $T=12$ frames, and a basis of $L=4$ third-order B-splines with periodic boundary conditions and maximum value of unity was used to represent all signal intensity-time curves. Poisson-disc sampling and $\min \tr((E^HE)^2)$ sampling with $R=12$ were compared, and an additional constraint was used to make the number of samples in each frame constant. Poisson-disc sampling patterns were generated independently for each frame. A temporally averaged g-factor was calculated using the pseudo multiple replica method described in Section \ref{subsec:supp} followed by a mean of the $L$ coefficient frames. Tikhonov regularization was used with a very small parameter ($\lambda=2\times 10^{-6}$).

The plot of $w$ sliced in two planes shows a representation of correlations in $k$-$t$-space. g-Factor is lower in $\min \tr((E^HE)^2)$ sampling due to the ability to balance these correlations with the improved match between $w$ and $p$ (Fig. \ref{fig:gmaps}d). This also corresponds with a more uniform $\Delta J$.

\subsection{Empirical Comparison of Optimality Criteria}
To compare the proposed and conventional optimality criteria for selecting arbitrary sampling patterns, sampling patterns minimizing $\tr((E^HE)^2)$ and mean squared error (MSE, given by $\tr((E^HE)^{-1})$), following the approach of \cite{xu2005optimal} and a uniform sampling pattern were compared for accelerated 2D fast spin echo \textit{in vivo} human knee data acquired from a GE 3T scanner (TE=7.3ms, TR=1s, echo train length=8, slice thickness=3mm, resolution=$1.25 \times 1.25$ mm\textsuperscript{2}, 8 channel knee coil). A reduction factor of 2 was used for all sampling patterns. Optimal sampling patterns for the MSE criterion were generated using the method described in \cite{xu2005optimal}, which reported a computation time of 5 minutes for the same matrix size. $w$ was computed from 2D sensitivity maps and summed along the readout. 
g-Factor maps were calculated with the pseudo multiple replica method described in Section \ref{subsec:supp} with the stopping criterion (\ref{eq:stopping}), which allowed convergence. Tikhonov regularization was used with a very small parameter ($\lambda=10^{-4}$).

For the proposed criterion, sampling patterns were generated in 367 ms, which included computation of $w$ in 304 ms and Algorithm \ref{alg:BC} in 63 ms. Fig. \ref{fig:oscar} shows sampling patterns, g-factor maps, and reconstructed images for the three sampling patterns. Commonly reported metrics for sampling patterns and the time to generate them are shown in Table \ref{tab:oscar}. MSE was calculated using images reconstructed from the original and subsampled data. The two optimality criteria yield similar results, indicating that the criteria are empirically equivalent. The small differences between the two patterns in all metrics and lower MSE for the proposed criterion are expected due to the nature of greedy methods. However, the time to generate the sampling pattern using the proposed criterion is much shorter due to the use of the inverse matrix in computing MSE.

Both criteria show an improvement over uniform sampling, which results in a small region of very high g-factor at the center of the image. This region shows marked noise artifacts in the zoom-in images. Although uniform sampling is optimal in many cases, uniform sampling can be suboptimal and similar results have previously been observed for the MSE criterion \cite{xu2005optimal}.

\begin{figure*}[!t]
\centering
\includegraphics[width=0.98\textwidth]{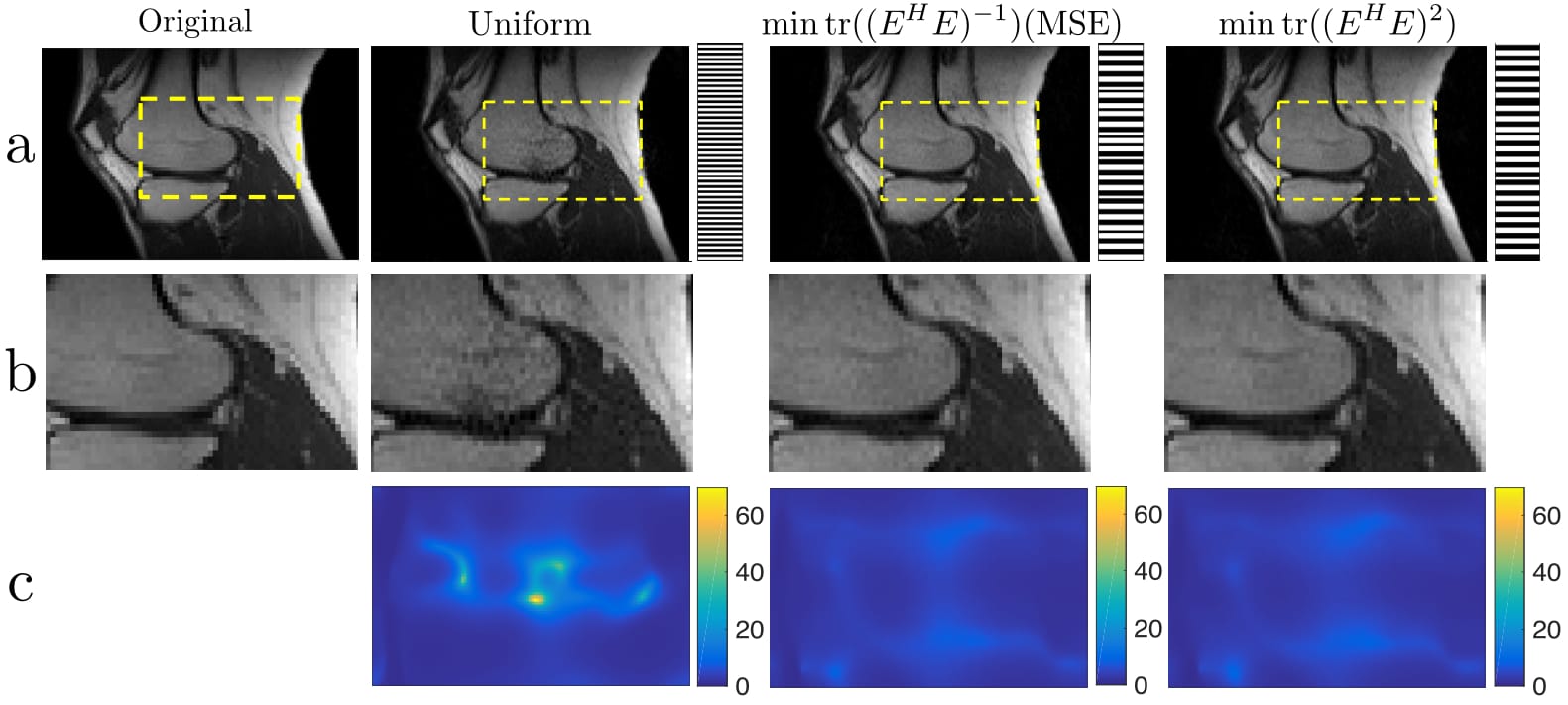}%
\caption{Two optimality criteria, mean squared error and the proposed, $\tr((E^HE)^2)$, are compared in parallel MRI reconstruction of 2D \textit{in vivo} human knee data. (a) Reconstructed images and sampling patterns appear similar and both show an improvement over uniform sampling. (b) Zoom-in images show a reduction in artifacts at the center of the image with either criteria, which is confirmed by (c) g-factor maps shown for the full image. Metrics and execution times are given in Table \ref{tab:oscar}.}
\label{fig:oscar}
\end{figure*}

\begin{table}
   \centering
    \caption{Metrics for sampling patterns compared in Fig. \ref{fig:oscar} and time to generate.}
    \label{tab:oscar}
    \begin{tabular}{c|c|c|c}
        \hline
                          & Uniform & min tr$((E^HE)^{-1})$ & min tr$((E^HE)^2)$ \\ 
                          &  & (MSE) & (Proposed)  \\ \hline \hline
        MSE               & 1       & 0.41                        & 0.38              \\ 
        tr$((E^HE)^2)$    & 1       & 0.978                       & 0.976              \\ 
        Max g-factor      & 69.6    & 10.1                        & 9.3                \\ 
        Median g-factor   & 3.14    & 3.05                        & 2.97               \\ 
        Time to generate  & 0       & 5 minutes                   & 0.37 seconds       \\ \hline \hline
    \end{tabular}
\end{table}

\subsection{Reconstruction Errors in $k$-Space} \label{subsec:Pn}
The analysis of \cite{athalye2015parallel} provides a description of parallel MRI as approximation in a reproducing kernel Hilbert space and a characterization of reconstruction errors in $k$-space. From this analysis, it follows that the space of $k$-space signals in (\ref{eq:generalSENSE}) is a reproducing kernel Hilbert space, for which the discrete matrix-valued reproducing kernel is
\begin{equation}
K_{ct,c^\prime t^\prime}(k,k^\prime) = \frac{1}{N} \sum_{r} (\sum_{l=1}^L S_{t,l,c}(r) S_{t^\prime,l,c^\prime}^*(r))e^{-\frac{2 \pi i (k - k^\prime) \cdot r}{N}}. 
\label{eq:repKernel}
\end{equation}
Like $K$, $w$ also represents correlations in $k$-space and is related to $K$ by
\begin{equation}
w(\Delta k, t, t^\prime) = \sum_{c,c^\prime} | K_{ct,c^\prime t^\prime}(\Delta k, 0) |^2. 
\label{eq:wKRelationship}
\end{equation}
As described in \cite{athalye2015parallel}, local approximation errors can be predicted by a \textit{power function}, which is computed from $K$ and the sample locations by solving a large linear system. For notational simplicity, we drop $t$ and $l$ indices in the following discussion. The power function is given by
\begin{equation}
P_{c}^2(k) =  K_{c,c}(0,0) + \sum_{n} \sum_{c^\prime=1}^C K_{c,c^\prime}(k,k_n) u_{c}^{*c^\prime n}(k),
\label{eq:Pfunc}
\end{equation}
where the first sum is over the sample locations $k_n$, $u_{c}^{c^\prime n}(k)$ are the cardinal functions (interpolation weights), defined in \cite{athalye2015parallel}. A coil-combined power function can be computed by a sum of $P_{c}^2(k)$ over coils.

One potential application of the power function is to guide the design of sampling patterns. However, the primary challenge is that the cardinal functions require solving large systems of linear equations, making computation of the power function very expensive. It is important to note that for the purpose of determining optimal sample locations, it is not necessary to compute quantitative error bounds. As a computationally inexpensive, albeit semiquantitative, metric, $\Delta J$ may provide the same information as a coil-combined power function. 

$\Delta J$ and the power function were compared for the experiments with parallel imaging described in Section \ref{sec:pi}. To reduce the computation time for the power function, the original breast data was cropped to a lower resolution (matrix size=$96\times 48$). The coil-combined power function was computed by summing over coils. $\Delta J$ and the power function were compared for uniform random and $\min \tr ((E^HE)^2)$ sampling patterns.

Fig. \ref{fig:Pn}a and b show that $\Delta J$ and the power function make similar predictions, while their relationship is nonlinear. This is confirmed in Fig. \ref{fig:Pn}c, which shows that the two metrics are highly correlated (Spearman's rank correlation = 0.994 for both sampling patterns). It is tempting to explain this by analyzing second term in (\ref{eq:Pfunc}), which may appear similar to the convolution in (\ref{eq:DeltaJ}). However, they are only proportional when the power function is summed over coils and $u_{c}^{c^\prime n}(k) = K_{c,c^\prime}(k,k_n)$, which corresponds to replacing the interpolation weights with a single convolution with the kernel. Although it is clear that a general relationship between the two metrics does not exist, their high correlation suggests that $\Delta J$ is a good surrogate for the power function for some purposes. This relationship makes the proposed method approximately equivalent to the sampling procedure suggested in \cite{athalye2015parallel} based on selecting samples at minima of the power function.

\begin{figure*}[!t]
\centering 
\includegraphics[width=0.98\textwidth]{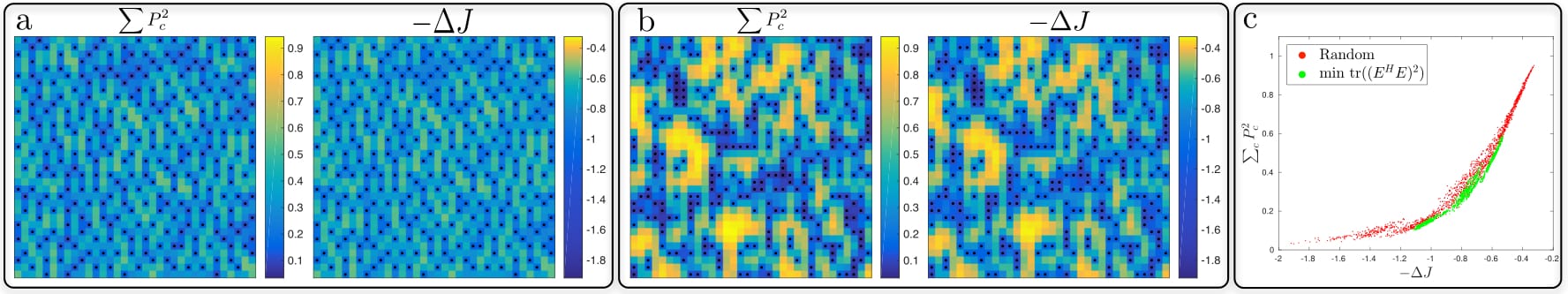}%
\caption{The combined power function $\sum_c P_c^2$ and $\Delta J$ for $\min \tr ((E^HE)^2)$ (a) and random (b) sampling patterns were compared for parallel imaging reconstruction with \textit{in vivo} human breast data. Sample locations are shown as block dots with each metric overlaid. Plotting the combined power function against $\Delta J$ from all $k$-space locations shows a nonlinear, monotic trend indicating that they provide similar information in this example (Spearman's rank correlation = 0.994 for both patterns). Note that the color scale in (a) and (b) has been adjusted by fitting a monotonic function to the data in (c).}
\label{fig:Pn} 
\end{figure*}

\subsection{Approximate Best Candidate Sampling}  \label{sec:approxbc} 
One strategy for speeding up Algorithm \ref{alg:BC} is to modify $w$ to have limited support, which is possible when the sensitivity functions are bandlimited. This assumption has been successfully used in procedures for automatic selection of sampling patterns that are based on computing interpolation weights \cite{samsonov2008optimality, samsonov2009automatic}. Similarly, $w$ can be replaced with a sparse surrogate $\hat{w}$ by thresholding. The number of inner-loop iterations is then reduced from $NT$ to $\text{supp}(\hat{w})$, where $\text{supp}(\hat{w})$ is the number of nonzeros in $\hat{w}$.

With this modification, a naive implementation still has time complexity $O(NTS)$, where $S$ is the number of samples, to compute the minimum of $\Delta J$. However, this step can be carried out efficiently using a priority queue. The inner-loop is performed by modifying only the region of $k$-$t$ space corresponding to the support of $w$. An implementation is summarized in Algorithm \ref{alg:approxBC}. A total of $2 \text{supp}(\hat{w})$ elements must be incremented, and incrementing each requires $O(\log(NT))$ complexity, resulting in a total computional complexity of $O(\text{supp}(\hat{w}) S \log(NT))$. If $\text{supp}(\hat{w})$ is small enough, this algorithm is faster than Algorithm \ref{alg:BC}. 

For the experiment in Section \ref{sec:pi}, the run time of the two algorithms is compared in Fig. \ref{fig:approxbc}. When the support of $\hat{w}$ is near 20,000 ($0.3N$), the run times for Algorithms \ref{alg:BC} and \ref{alg:approxBC} are equal. In general, this crossing point depends on $T$ and $N$, and for $T$ large, the support of $\hat{w}$ must be smaller. When the support of $\hat{w}$ is only 16 ($0.0002N$), this results in nearly optimal g-factor and objective value, which shows that best candidate sampling can be dramatically simplified without sacrificing performance. At this level of approximation, the run time is near one second, which is sufficient for interactive applications. 

\begin{algorithm}
\caption{Approximate $\min \tr((E^HE)^2)$ Best Candidate Sampling}
\begin{algorithmic}[1]
\Procedure{ApproximateBestCandidate}{}
\State Set $\hat{w}$ to a thresholded version of $w$
\State Initialize $\Delta J(k,t) = \hat{w}(0,t,t) $
\State Initialize $s(k,t) \leftarrow 0$
\State Construct priority queue with all $k$-$t$-space locations as keys and corresponding $\Delta J$ as values \Comment{$O(NT)$}
\For{1..number of samples}
\State $(k^\prime,t^\prime) = \arg \min_{k,t} \Delta J(k,t)$ \Comment{$O(\log{(NT)})$}
\State $s(k^\prime,t^\prime) = s(k^\prime,t^\prime) + 1$
\LineComment{First term in (\ref{eq:DeltaJIns}).}
\For{$(k,t) \in \{ (k,t) : \hat{w}(k - k^\prime, t, t^\prime) \neq 0 \}$} 
\State $\Delta J(k,t) \leftarrow \Delta J(k,t) +  \hat{w}(k - k^\prime, t, t^\prime)$ 
\EndFor
\LineComment{Second term in (\ref{eq:DeltaJIns}).}
\For{$(k,t) \in \{ (k,t) : \hat{w}(k^\prime - k, t^\prime, t) \neq 0 \}$}
\State $\Delta J(k,t) \leftarrow \Delta J(k,t) +  \hat{w}(k^\prime - k, t^\prime, t)$ 
\EndFor
\EndFor
\State \textbf{return} $s(k,t)$
\EndProcedure
\end{algorithmic}
\label{alg:approxBC}
\end{algorithm}

\begin{figure*}[!t]
\centering
\includegraphics[width=0.98\textwidth]{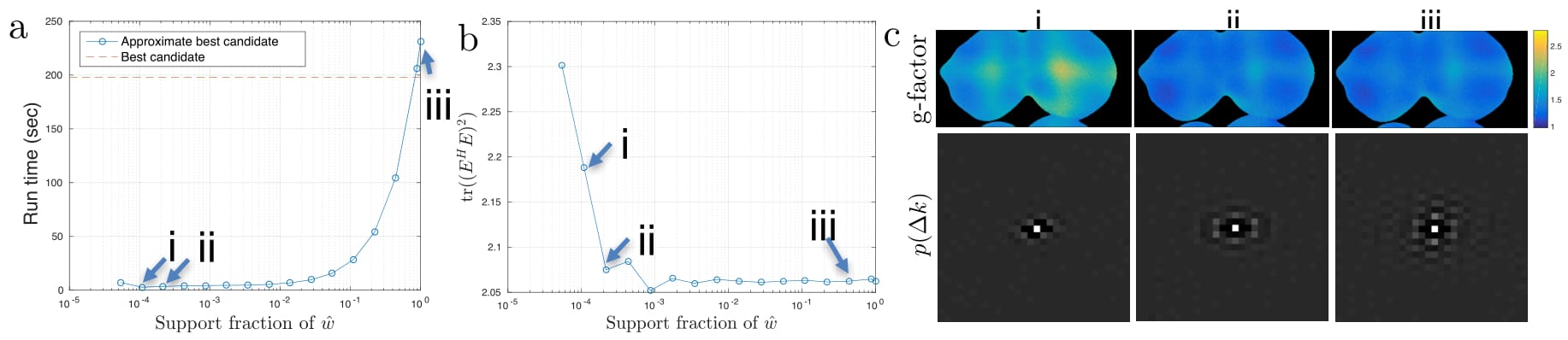}
\caption{For efficiency, best candidate sampling can be approximated by replacing $w$ with a thresholded version $\hat{w}$. Reducing the support of $\hat{w}$ results in (a) shorter run times with (b) nearly-optimal values of $\tr((E^HE)^2)$ until $\hat{w}$ is too small to accurately represent correlations. g-Factors and differential distributions are shown in (c) for three points on the tradeoff, where point ii achieves similar g-factor to iii but has much shorter run time ($\approx 1$ second). For point i, $w$ is not well-approximated by $\hat{w}$, resulting in suboptimal sampling. Increasing the support of $\hat{w}$, the differential distribution shows variation over a larger region.}
\label{fig:approxbc}
\end{figure*}

\subsection{Periodic Sampling} \label{sec:periodic} 
To enable fast selection of sampling patterns, one strategy is to consider only a subset, such as periodic sampling patterns. If a sampling pattern can be expressed as a convolution of two sampling patterns, its differential distribution is the convolution of the differential distributions for the two sampling patterns. A periodic sampling pattern can be represented as a convolution of a lattice of $\delta$ functions with period $n_0$, which we denote $\Sh$ and a cell $s_0^{(t)}$ of size $n_0$: 
\begin{equation}
s^{(t)} = \Sh * s_0^{(t)}
\end{equation}

Let $p_0(\Delta k, t, t^\prime)$ be the differential distribution of $s_0^{(t)}$. Then 
\begin{equation}
p(\Delta k, t, t^\prime) = \frac{N}{n_0} \Sh * p_0(\cdot, t, t^\prime).
\label{eq:p_caip}
\end{equation}
Since $p$ is periodic, only the period of $p$, $p_0$, needs to be computed. Since a small number of periodic sampling patterns need to be considered, and often significantly fewer than the number of unique ones, it is feasible to exhaustively evaluate periodic sampling patterns.

Differential distributions can be efficiently computed using (\ref{eq:p_caip}). 2D-CAIPIRINHA \cite{Breuer2006} sampling patterns with an acceleration factor of $R$ are periodic with a period of size $R\times R$. The acceleration factors in $y$ and $z$ directions, $R_y$ and $R_z$ with $R_yR_z=R$, must be chosen in addition to one shift parameter.

All 12 2D-CAIPIRINHA sampling patterns with $R=6$ were evaluated using g-factor maps and $\Delta J$, with the exception of 4 that resulted in extremely high g-factors. g-Factor maps were computed using the analytical expression \cite{pruessmann1999sense}. Spearman's rank correlation was used to show agreement between $\tr((E^HE)^2)$ and several g-factor metrics, including the maximum g-factor which was used previously \cite{deshpande2012optimized,Weavers2014}. Fig. \ref{fig:caip} shows that the pattern with $3 \times 2$ acceleration in $k_y$ and $k_z$ has the most uniform $\Delta J$ and lowest value of $\tr((E^HE)^2)$. The rank correlations between $\tr((E^HE)^2)$ and the mean, maximum, and root-mean-square g-factor were $0.93$, $0.95$, and $0.90$ respectively, indicating strong agreement between $\tr((E^HE)^2)$ and g-factor-based criteria.

\begin{figure*}[!t]
\centering
\includegraphics[width=0.99\textwidth]{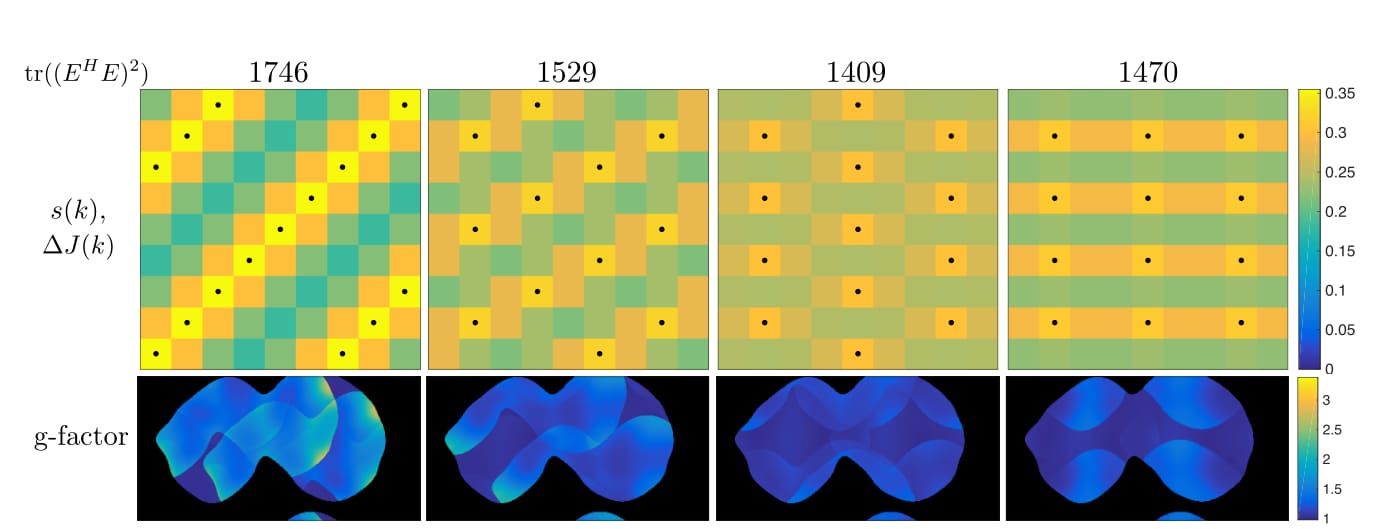}
\caption{Periodicity of the sampling pattern is preserved in the differential distribution, allowing efficient evaluation of periodic sampling patterns. 2D-CAIPIRINHA sampling patterns with $R=6$ are compared with samples indicated by black dots and, $\Delta J$ showing local contributions to $\tr((E^HE)^2)$, along with corresponding g-factor maps are shown. Strong agreement was observed between $\tr((E^HE)^2)$ and several g-factor metrics (Spearman's rank correlation = 0.90, 0.93, and 0.95 for mean, max, and root-mean-square g-factors).}
\label{fig:caip}
\end{figure*}

\section{Summary and Conclusion} 
A novel concept of a differential distribution was shown to be related to conditioning in linear reconstruction via moment-based spectral analysis of the information matrix, leading to a new criterion for sampling. This formalism describes a natural link between point-spread-functions, geometric features of sample distributions, and conditioning for linear reconstruction. It also has unique computational advantages that enable efficient, automatic design of arbitrary and adaptive multidimensional Cartesian sampling patterns. These advantages are threefold: 1) the differential distribution is defined close to the domain where sampling patterns are designed, allowing basic operations to be carried out efficiently, 2) matrix inversion is avoided, and 3) local dependencies are compactly represented in the differential distribution. Greedy algorithms exploiting these advantages are presented and shown to achieve very fast execution times, often sufficient for real-time interactive applications.

The present approach has several limitations. First, only linear reconstructions are considered. Incorporating arbitrary regularization requires considering both conditioning and additional criteria that may lead to variable density or incoherent aliasing. Although many reconstructions in MRI are not linear, the framework may provide some guidance for inverse problems that are bilinear, linear in a subset of variables, or can be associated with a linear inverse problem. Second, the variance in the distribution of eigenvalues has no general relationship to the g-factor or power function, which are comprehensive and quantitative measures of noise amplification but are computationally expensive. Numerical experiments with \textit{in vivo} human data illustrate strong agreement between the criterion and these metrics. This criterion is not a replacement for these metrics but rather, a complementary metric that has utility for sampling. Empirically, the criterion does not select pathological eigenvalue distributions that have a small variance but are still ill-conditioned. Third, while the theory readily extends to non-Cartesian imaging, only Cartesian imaging is considered in the present work. Similar Algorithms for non-Cartesian sampling would be trajectory-dependent. We note that these greedy approaches do not guarantee a globally optimal solution. Finally, arbitrary sampling patterns often require some regularization that is not required in reconstruction from uniform subsampling.

The implications of the proposed criterion have been investigated on multiple fronts. Numerical experiments in support-constrained, parallel, and dynamic MRI have demonstrated the adaptive nature of the approach and agreement between the criterion and g-factor. Further experimental validation in 2D and 3D Cartesian MRI with arbitrary and periodic sampling patterns shows agreement with existing metrics, including MSE, and the $k$-space-based power function metric. The formalism has been related to an approximation theory formalism \cite{athalye2015parallel}, which offers similar descriptions of correlations in $k$-space and predictions about local reconstruction errors. The approach and results in this paper indicate the potential for computationally feasible, arbitrary, and adaptive sampling techniques for linear, multidimensional MRI reconstruction.

\section*{Appendix A: Proof of Theorem \ref{thm:theorem1}}
\label{app:theorem1}
\begin{proof}[\unskip\nopunct]
Let $S_{tc,l}$ be the diagonal operator corresponding to point-wise multiplication by $S_{t,c,l}(r)$ and $D_t$ be the diagonal sampling operator for frame $t$. 

Block $ij$ of the matrix $E^HE$ is 
\begin{equation}
(E^HE)_{ij} = \sum_{t=1}^T \sum_{c=1}^C S_{tc,i}^H F^H D_t F S_{tc,j} 
\end{equation}
Then
\begin{align}
\tr(E^HE) &= \sum_{l=1}^L \tr( (E^HE)_{ll} ) \\
 &= \sum_{l=1}^L \sum_{t=1}^T \sum_{c=1}^C \tr( S_{tc,l}^H F^H D_t F S_{tc,l} ) \\
 &= \sum_{l=1}^L \sum_{t=1}^T \sum_{c=1}^C \sum_r |S_{t,l,c}(r)|^2 e_r^H F^H D_t F e_r \\
 &= \sum_{l=1}^L \sum_{t=1}^T \sum_{c=1}^C \sum_r |S_{t,l,c}(r)|^2 PSF_t(0)
\end{align}
\end{proof}

\section*{Appendix B: Proof of Theorem \ref{thm:theorem2}}
\label{app:theorem2}
\begin{proof}[\unskip\nopunct]
Let $S_{tc,l}$ be the diagonal operator defined in Appendix A.
Then 
\begin{align}
&\tr( (E^HE)^2 ) \nonumber \\
=& \sum_{l=1}^L \tr( \sum_{l^\prime=1}^{L} (E^HE)_{l  l^\prime}(E^HE)_{l^\prime l} ) \label{eq:thm2_step1} \\
=& \sum_{l,l^\prime,t,t^\prime,c,c^\prime} \tr( S_{tc,l}^H F^H D_t F S_{tc,l^\prime} S_{t^\prime c^\prime,l^\prime}^H F^H D_{t^\prime} F \nonumber \\
 & S_{t^\prime c^\prime,l} ) \label{eq:thm2_step3} \\ 
=& \sum_{l,l^\prime,t,t^\prime,c,c^\prime,r,r^\prime} PSF_t(r-r^\prime) PSF_{t^\prime}^*(r-r^\prime) \nonumber \\
& S_{t,c,l^\prime}(r^\prime) S_{t^\prime, c^\prime,l^\prime}^*(r^\prime) S_{t,c,l}^*(r) S_{t^\prime,c^\prime,l}(r) \label{eq:thm2_step4} \\
=& \sum_{l,l^\prime,t,t^\prime,c,c^\prime} \langle (S_{t,l,c} \cdot S_{t^\prime,l, c^\prime}^*)(r), ((PSF_t \cdot PSF_{t^\prime}^*) \nonumber \\
 & * (S_{t,l^\prime,c} \cdot S_{t^\prime,l^\prime,c^\prime}^*))(r) \rangle \label{eq:thm2_step5} \\
=& \sum_{l,l^\prime,t,t^\prime,c,c^\prime} \langle \mathcal{F}\{ S_{t,l,c} S_{t^\prime, l, c^\prime}^* \}, p(\Delta k,t,t^\prime) \cdot \nonumber \\
 & \mathcal{F}\{ S_{t,l^\prime,c} \cdot S_{t^\prime,l^\prime,c^\prime}^* \} \rangle \label{eq:thm2_step6} \\
=& \sum_{l,l^\prime,t,t^\prime,c,c^\prime} \langle | \sum_{l=1}^L \mathcal{F}\{ S_{t,l,c}\cdot S_{t^\prime, l, c^\prime}^* \} |^2, p(\Delta k, t,t^\prime) \rangle \label{eq:thm2_step7} \\
=& \langle w, p \rangle \label{eq:thm2_step8}
\end{align}
Parseval's theorem and the convolution theorem are applied in (\ref{eq:thm2_step6}). The inner product in (\ref{eq:thm2_step5}), (\ref{eq:thm2_step6}-\ref{eq:thm2_step7}), and (\ref{eq:thm2_step8}) are sums over $r$, $\Delta k$, and $(\Delta k, t, t^\prime)$ respectively.
\end{proof}

\section*{Acknowledgment}
The authors would like to thank Dr. V. Taviani for help in experiments with in vivo knee data.



%

\bibliographystyle{IEEEtran}
\bibliography{IEEEabrv,bibliography}

\end{document}